\documentclass{article}

\usepackage{tikz}
\usetikzlibrary{matrix,arrows.meta,positioning,shapes.geometric}

\usepackage{lineno,hyperref}
\usepackage{amsthm}
\usepackage{fullpage}
\usepackage{amsmath}
\usepackage{amsfonts}
\usepackage{amssymb}
\usepackage{graphicx}
\usepackage{color}
\usepackage{epsf}
\usepackage{float}
\usepackage{ulem}
\usepackage{breqn}
\usepackage{bigints}
\usepackage[english]{babel}
\usepackage[nottoc]{tocbibind}

\newtheorem{theorem}{Theorem}[section] 

\theoremstyle{definition}

\theoremstyle{remark}
\newtheorem{remark}[theorem]{Remark}

\title{Drivers of Transient Dynamics and Persistence in Dengue: Insights from Sensitivity and Stochastic Modeling}
\author{Cesar Alberto Rosales-Alcantar~\footnote{Corresponding author}~\thanks{Centro de Investigaci\'on en Matem\'aticas A.C. Unidad M\'erida, km 5.5 Carretera Sierra Papacal - Chuburn\'a Puerto, M\'erida 97302, Yucat\'an, México},  \and Marcos A. Capistrán~\thanks{Centro de Investigaci\'on en Matem\'aticas A.C. Unidad M\'erida, km 5.5 Carretera Sierra Papacal - Chuburn\'a Puerto, M\'erida 97302, Yucat\'an, México}}
\date{February 2, 2026}

\begin{document}

\maketitle
\begin{abstract}
We investigate how key epidemiological parameters shape both seasonal epidemics and the persistence of dengue transmission. Our findings confirm known mechanistic drivers of epidemic variability and introduce a ranking of parameter importance in our dengue model, which in turn informs the prioritization of public health policies. We propose a stochastic vector–host model with waning immunity, exogenous infection, and vertical transmission. To assess parameter influence, we first qualitatively analyze the macroscopic model. We then perform a multivariate Sobol sensitivity analysis of epidemic summary statistics, and examine the variance of the endemic equilibrium as a function of model parameters. We show that the macroscopic model is well posed, vertical transmission lowers the threshold for persistence, and low spatial coupling increases infectious endemic equilibria. The vector–host population ratio and host recovery rate have the largest first-order and total sensitivity indices, surpassing the contact rates; this implies that control measures during seasonal dengue should prioritize protecting infectious hosts from mosquito bites. Finally, we show that the covariance of hosts and vectors at the endemic equilibrium is asynchronous in the contact-rate plane. This robust pattern has epidemiological, ecological and evolutive interpretations. A dengue strain has two niches to exploit during the endemic regime, and coexisting strain have two niches each. Moreover, large fluctuations in a given strain during the endemic regime provide a mechanistic explanation for high vertical transmission, enabling viral reservoirs that can hatch and trigger outbreaks in the following season.
We argue that our model and results can be adapted to address specific public health questions to guide dengue control using field data.
\end{abstract}
%
\pagestyle{plain}

\textbf{Keywords:} Dengue transmission dynamics, Sensitivity analysis, Stochastic modeling, Linear noise approximation, Stochastic amplification.
\section{Introduction}
\label{sec:introduction}

In this paper, we build on the framework introduced by Dietz~\cite{Dietz1975} to investigate how key parameters influence dengue transmission dynamics. Specifically, we examine the effects of contact rates, exogenous infection, vertical transmission, waning immunity, and the relative sizes of host and vector populations on the scale and progression of dengue epidemics. Our analysis focuses on the impact of these parameters: (i) on seasonal epidemic summary statistics, and (ii) the variance of the endemic equilibrium. The rationale for using a one-strain model is that it avoids the added complexity of cross-immunity and antibody-dependent enhancement, allowing a clearer analysis of fundamental mechanisms such as the basic reproduction number, persistence conditions, and the effects of vector-control measures. This formulation also portrays the type of surveillance data typically available, which often lack serotype resolution, and enables an assessment of how demographic, entomological, and epidemiological parameters shape both seasonal epidemic dynamics and long-term endemic levels. The results provide insight into the main drivers of transmission and support the design of effective control strategies, while also serving as a baseline for more detailed multi-strain models in the future.

\textbf{Related work.} Beginning with the seminal contributions of Ross~\cite{ross1915some,ross1916application} and MacDonald~\cite{macdonald1952analysis,macdonald1957epidemiology}, many classical and contemporary studies have examined the dynamics of vector-borne diseases. Regarding dengue modeling, Dietz~\cite{Dietz1975} introduced a deterministic host–vector model, defined $\mathcal{R}_{0}$, and laid the foundation for modern vector control and vaccination strategies. Esteva and Vargas~\cite{esteva1998analysis} developed a rigorous mathematical framework demonstrating the global stability of the endemic equilibrium and showing how vector control measures governed by threshold conditions can effectively influence disease dynamics. Adams and Boots~\cite{adams2010important} showed that vertical transmission in mosquitoes, although not the sole factor, plays an important role in dengue virus persistence, particularly during inter-epidemic periods. Lloyd \textit{et al.}~\cite{lloyd2007stochasticity} analyzed the role of stochasticity and heterogeneity in vector–host models, while Colon \textit{et al.}~\cite{colon2013effects} examined the effects of weather and climate change on dengue epidemics. Alonso \textit{et al.}~\cite{alonso2007stochastic} developed a stochastic theory showing how random nonlinear epidemic dynamics and weak spatial coupling, can amplify noise and drive the transition from regular to irregular epidemic cycles. Their work demonstrated that demographic stochasticity is a key mechanism behind the complex oscillatory patterns observed in childhood diseases and that these effects remain robust even under seasonal forcing. Contemporary reviews of vector-borne epidemic modeling include Natsir \textit{et al.}~\cite{natsir2023transmission}, who showed that all four dengue serotypes remain endemic in Asia. Smith \textit{et al.}~\cite{smith2012ross,smith2014recasting} provided a comprehensive synthesis of Ross–Macdonald theory, establishing it as a mathematical framework for analyzing the dynamics and control of mosquito-borne pathogens. Their work emphasized the assumptions underlying mosquito–host transmission and highlighted the central role of vectorial capacity and reproduction numbers in quantifying transmission potential and guiding control strategies. Aguiar \textit{et al.}~\cite{aguiar2022mathematical} reviewed mathematical models for dengue epidemiology, with emphasis on multi-strain frameworks incorporating host-to-host and vector–host transmission, antibody-dependent enhancement, and temporary cross-immunity. Their contribution showed how these mechanisms explain the complex and sometimes chaotic dynamics of dengue transmission and provide insights to inform public health strategies.

\textbf{Contributions and limitations.}  
This work is intended as a first--principles, coarse--grained analysis of dengue transmission, aimed at identifying the relative importance of key epidemiological parameters and stochastic mechanisms, rather than providing precise operational parameter estimates for specific surveillance settings.

First, we show that the stability of the disease-free and endemic equilibria in our model is determined by the basic reproductive number, with $\mathcal{R}_{0}=1$ as the threshold. We also show that endemic equilibrium remains stable under small perturbations of the force of infection representing weak spatial coupling or exogenous infections, and the corresponding infectious steady states become larger. Next, using parameter ranges reported by Aguiar \textit{et al.}~\cite{aguiar2022mathematical} and others, we assess the influence of different parameters on transient, or seasonal, dengue dynamics. We conduct a multivariate Sobol sensitivity analysis, which provides both first-order and total-effect indices to quantify the relative contribution of each parameter to seasonal epidemic peak, trough, variability and total population burden. For the system initial conditions in the sensitivity analysis, we assume that a small number of infectious hosts and mosquitoes is sufficient to initiate local community transmission in a naive population at the beginning of the season. Generalizations of the experiment, such as considering weather driven mosquito population increase or a specific attack rate, are natural extensions that should follow from public health questions.

The sensitivity analysis confirms that the vector--host population ratio $C$ and the human recovery rate $\gamma_h$ dominate the system’s transient dynamics sensitivity, consistently yielding the largest indices across epidemic summary statistics such as peak, trough, variability, and total infectious burden of each species. In contrast, the contact rates $\beta_h$ and $\beta_v$, although biologically central to transmission, rank lower than the scaling effects of $C$ and the temporal modulation induced by $\gamma_h$. This quantitative ranking not only recovers known mechanistic insights—that population scaling and infectious period are principal drivers of seasonal dengue—but also establishes a hierarchy among secondary parameters. This ordering has direct policy implications: for example, reducing vector density or shortening the infectious period is predicted to be more effective than targeting contact intensity alone.  

Finally, similar to Alonso \textit{et al.}~\cite{alonso2007stochastic}, we use the endemic equilibrium, the linear noise approximation, the Wiener–Khinchin theorem, and a Lyapunov algebraic equation to derive an expression for the variance of the endemic equilibrium as a function of the model parameters. Numerical evidence shows that the regions of maximum variance of hosts and vectors are asynchronous in the host-to-vector and vector-to-host contact-rate plane, and that this structure is robust to small parameter perturbations, changing only the variance maximum values. Gillespie realizations in this setting further illustrate how selected values of the contact rates can give rise to a significant number of infections around the endemic equilibrium over the course of a year. These findings suggest that further exploration of the endemic state variance as a quantitative measure of fluctuations of cases past the seasonal outbreak is warranted to inform public health policies aimed at reducing transmission risk.  

A limitation of our study is that it relies on published parameter ranges, which may not fully capture the variability across different epidemiological settings. Consequently, future analyses should be guided by specific public health questions and accompanied by Bayesian inference of system parameters.  

The remainder of the paper is organized as follows. Section~\ref{sec:methods} introduces the transmission model and provides a brief overview of the mathematical techniques employed in the analysis. Section~\ref{sec:results} presents the main findings, with particular emphasis on their implications for the transient and persistent dynamics of dengue epidemics. Section~\ref{sec:discussion} places these results in a broader epidemiological context, highlighting their relevance for understanding dengue persistence and informing control strategies. Finally, Section~\ref{sec:conclusions} summarizes the key insights of the study and outlines potential directions for future research.

\section{Methods}
\label{sec:methods}

In this section, we introduce a mathematical model for dengue transmission that includes key epidemiological features of the disease, including human-mosquito interactions and vector vertical transmission. We also outline a set of analytical and computational tools used to investigate the mechanisms that sustain endemic dengue transmission in human populations. These tools allow us to explore both deterministic and stochastic aspects of the model, and to identify the dominant drivers of seasonality and persistence in disease dynamics.

\begin{figure}
\begin{center}
\begin{tikzpicture}
  \matrix (m) [matrix of math nodes,row sep=2em,column sep=2em]
  { \vphantom{nil} & & & & \vphantom{nil} & & \\
      & S_{h} & & \vphantom{nil} & \vphantom{nil} & \vphantom{nil} &\\
     & & \vphantom{nil} & & \vphantom{nil} & S_{v} & \vphantom{nil}\\  
      & I_{h} & & \vphantom{nil} & \vphantom{nil} & \vphantom{nil} & \vphantom{nil}\\     
      & &\vphantom{nil} & & \vphantom{nil}& I_{v} & \vphantom{nil}\\ 
      & R_{h} & & \vphantom{nil} && \vphantom{nil}& \vphantom{nil}\\
      & &\vphantom{nil} & & \vphantom{nil}& & \vphantom{nil}\\};
  \path[-stealth]
    (m-1-1) [bend right = 20]edge node [below left] {$\mu_{h} $} (m-2-2)
    (m-2-5) [bend right = 20]edge node [below left] {$(1-p_v\frac{I_v}{N_v})\mu_{v} $} (m-3-6) 
	(m-2-2) [bend left = 20] edge node [left] {$\lambda_h$} (m-4-2)    
    (m-2-2) [bend left = 20] edge node [above right] {$\mu_{h}$} (m-3-3)
	(m-4-5) [bend right = 20] edge node [below left] {$p_v\mu_{v}\frac{I_v}{N_v}$} (m-5-6)    
    (m-3-6) [bend left = 20] edge node [above right] {$\mu_{v}$} (m-4-7)    
	(m-3-6) [bend left = 20] edge node [left] {$\lambda_v$} (m-5-6)    
    (m-4-2) [bend left = 20] edge node [above right] {$\mu_{h}$} (m-5-3)
    (m-4-2) [bend left = 20] edge node [left] {$\gamma_{h}$} (m-6-2)          
    (m-6-2) [bend left = 40] edge node [left] {$\omega_{h}$} (m-2-2)              
    (m-5-6) [bend left = 20] edge node [above right] {$\mu_{v}$} (m-6-7)  
	(m-6-2) [bend left = 20] edge node [above right] {$\mu_{h}$} (m-7-3);	 
\end{tikzpicture}
\end{center}
\caption{{\bf Dengue transmission model with vertical transmission in vectors.} 
The diagram shows a compartmental model of dengue transmission between humans and mosquitoes. Humans are divided into susceptible ($S_h$), infectious ($I_h$), and recovered ($R_h$) classes. Mosquitoes are divided into susceptible ($S_v$) and infectious ($I_v$) classes.
Susceptible humans are infected at rate $\lambda_h$ through bites from infectious mosquitoes. They recover at rate $\gamma_h$, gain temporary immunity, and return to the susceptible class at rate $\omega_h$. Human mortality occurs at rate $\mu_h$.
Mosquitoes are born at a baseline rate $\mu_v$, with a fraction $p_v \tfrac{I_v}{N_v}$ infected by vertical transmission. Susceptible mosquitoes become infected at rate $\lambda_v$ through biting infectious humans and die at rate $\mu_v$.
Human and mosquito populations are constant, with $N_h = S_h + I_h + R_h$ and $N_v = S_v + I_v$. The parameter $\eta$ (not shown) represents spatial coupling in hosts.
}
\label{fig:model}
\end{figure}

\begin{table}[htp]
\begin{center}
\textbf{Human events}
\begin{displaymath}
\renewcommand{\arraystretch}{1.4}
\begin{array}{|l|l|l|}
\hline
\mbox{Birth} & \alpha_{1}(x)=\mu_{h} N_h & \bar{v}_{1}=(+1,0,0,0,0)\\[0.3ex]
\mbox{Infection} & \alpha_{2}(x)=\lambda_h x_1& \bar{v}_{2}=(-1,+1,0,0,0) \\[0.3ex]
\mbox{Removal} & \alpha_{3}(x)=\gamma_h x_2& \bar{v}_{3}=(0,-1,+1,0,0) \\[0.3ex]
\mbox{Immunity loss} & \alpha_{4}(x)=\omega_h x_3& \bar{v}_{4}=(+1,0,-1,0,0) \\[0.3ex]
\mbox{Susceptible death} & \alpha_{5}(x)=\mu_h x_1& \bar{v}_{5}=(-1,0,0,0,0) \\[0.3ex]
\mbox{Infectious death} & \alpha_{6}(x)=\mu_h x_2& \bar{v}_{6}=(0,-1,0,0,0) \\[0.3ex]
\mbox{Removed death} & \alpha_{7}(x)=\mu_h x_3& \bar{v}_{7}=(0,0,-1,0,0) \\[0.3ex]
\hline
\end{array}
\end{displaymath}

\textbf{Vector events}
\begin{displaymath}
\renewcommand{\arraystretch}{1.4}
\begin{array}{|l|l|l|}
\hline
\mbox{Susceptible birth} & \alpha_{8}(x)= \mu_v(N_v-p_vx_5)& \bar{v}_{8}=(0,0,0,+1,0)\\[0.3ex]
\mbox{Infectious birth}   & \alpha_{9}(x)= \mu_vp_vx_5  & \bar{v}_{9}=(0,0,0,0,+1)\\[0.3ex]
\mbox{Infection} & \alpha_{10}(x)= \lambda_v x_4 & \bar{v}_{10}=(0,0,0,-1,+1)\\[0.3ex]
\mbox{Susceptible death}    & \alpha_{11}(x)= \mu_{v} x_4 & \bar{v}_{11}=(0,0,0,-1,0)\\[0.3ex]
\mbox{Infectioous death}    & \alpha_{12}(x)= \mu_{v} x_5 & \bar{v}_{12}=(0,0,0,0,-1)\\[0.3ex]
\hline
\end{array}
\end{displaymath}
\end{center}
\caption{{\bf Epidemic events, propensities, and stoichiometry of the stochastic dengue transmission model.} 
This table describes the discrete events that define dengue transmission between humans and mosquitoes, modeled as a continuous-time Markov jump process. For each event, the propensity $\alpha_i(x)$ gives the rate of occurrence, and the stoichiometric vector $\bar{v}_i$ specifies how the system state $x = (x_1, x_2, x_3, x_4, x_5)$ changes.
For humans, the events are birth, infection, recovery, loss of immunity, and death. Individuals are born susceptible, infected at rate $\lambda_h = \beta_h \tfrac{x_5}{N_h} + \epsilon_h$, recover at rate $\gamma_h$, and lose immunity at rate $\omega_h$. All humans die at rate $\mu_h$, while the total population size remains constant.
For mosquitoes, susceptible individuals are born at a rate proportional to the uninfected fraction, and infectious individuals arise through vertical transmission at rate $\mu_v p_v x_5$. Mosquitoes are infected at rate $\lambda_v = \beta_v \tfrac{x_2}{N_v}$ and die at rate $\mu_v$.
All propensities follow mass-action kinetics. Population sizes are $N_h = x_1 + x_2 + x_3$ and $N_v = x_4 + x_5$, both constant. The vertical transmission fraction is $p_v \in [0,1]$.
}
\label{tab:model}
\end{table}

\begin{table}[ht]
\begin{center}
\renewcommand{\arraystretch}{1.4} 
\begin{tabular}{|p{5.5cm}|c|p{3.0cm}|p{1.15cm}|}
\hline
\multicolumn{4}{|c|}{\textbf{HUMAN PARAMETERS}} \\[0.5ex]
\hline
\textbf{Definition} & \textbf{Symbol} & \textbf{Value} & \textbf{Source} \\[0.5ex]
\hline
Population size & $N_h$ & $[10^2,10^5]$ persons & \cite{Gomez-Vargas2024} \\[0.3ex]
Birth/death rate & $\mu_h$ & $[\frac{1}{70},\frac{1}{50}]\text{years}^{-1}$ & \cite{aguiar2022mathematical} \\[0.3ex]
Recovery rate & $\gamma_h$ & $[\frac{1}{12},\frac{1}{3}]\text{days}^{-1}$ & \cite{aguiar2022mathematical} \\[0.3ex]
Immunity waning rate & $\omega_h$ & $[\frac{1}{2},2]\text{years}^{-1}$ & \cite{sanchez2023mathematical} \\[0.3ex]
Exogenous infection rate & $\epsilon_h$ & $10^{-5}\text{days}^{-1}$ & \cite{alonso2007stochastic} \\[0.3ex]
\hline
\multicolumn{4}{|c|}{\textbf{MOSQUITO PARAMETERS}} \\[0.5ex]
\hline
Population size & $N_v$ & $[10^{2},10^{9}]$ vectors & \cite{Gomez-Vargas2024} \\[0.3ex]
Birth/death rate & $\mu_v$ & $[\frac{1}{20},\frac{1}{7.7}]\text{days}^{-1}$ & \cite{aguiar2022mathematical} \\[0.3ex]
Vertical transmission probability & $p_v$ & $[0,0.5]$ & \cite{mbaoma2025significance} \\[0.3ex]
To-human transmission rate & $\beta_h$ & $[0,\frac{1}{4}]\text{days}^{-1}$ & \cite{aguiar2022mathematical} \\[0.3ex]
From-human transmission rate & $\beta_v$ & $[0,\frac{3}{4}]\text{days}^{-1}$ & \cite{aguiar2022mathematical} \\[0.3ex]
\hline
\end{tabular}
\end{center}
\caption{\textbf{Model parameters.} Human and mosquito parameters of the vector–host dengue model. The table shows each parameter’s definition, symbol, range of values, and source. Human parameters include demographic rates, recovery, immunity waning, and exogenous infection. Mosquito parameters include demographic rates, vertical transmission, and transmission between humans and vectors. All parameter ranges are taken from the cited references.}
\label{tab:parameters}
\end{table}


\subsection{Qualitative analysis of the macroscopic model}
\label{sec:macroscopic}
In this section we carry out the qualitative analysis of the macroscopic model defined by Figure~\ref{fig:model} and 
Table~\ref{tab:model}. Using the standard notation for deterministic models we have
\begin{equation}
\begin{array}{ccl}
\dot{S}_h &=& \mu_h N_h - \beta_{h} \frac{I_v}{N_h}S_h - \epsilon_hS_h - \mu_h S_h + \omega_hR_h,\\
\dot{I}_h &=& \beta_{h} \frac{I_v}{N_h}S_h + \epsilon_hS_h - (\gamma_h+\mu_h) I_h,\\
\dot{R}_h &=& \gamma_h I_h - (\omega_h+\mu_h) R_h,\\
\dot{S}_v &=& \mu_v(N_v-p_v I_v) - \beta_{v}\frac{I_h}{N_h}S_v  - \mu_v S_v,\\
\dot{I}_v &=& \beta_{v} \frac{I_h}{N_h}S_v  + (p_v-1) \mu_v I_v. \\
\end{array}
\label{eqn:vectorhost}
\end{equation}

Here, the force of infection on the human host is given by $\lambda_h = \beta_h \frac{I_v}{N_h} + \epsilon_h$, where the first term represents vector-mediated transmission and the second term $\epsilon_h$ accounts for a background infection rate due to weak spatial coupling, or exogenous infections. The force of infection on the vector population is given by $\lambda_v = \beta_v \frac{I_h}{N_v}$. Host individuals are born and die at rate $\mu_h$, recover at rate $\gamma_h$, and lose immunity at rate $\omega_h$. For the mosquito, we assume equal birth and death rates $\mu_v$, and that a fraction $p_v$ of newly born mosquitoes is infected vertically in proportion to the current number of infectious mosquitoes.

To analyze model~\eqref{eqn:vectorhost}, we compute first the basic reproduction number $\mathcal{R}_0$ as the dominant eigenvalue of the next-generation matrix~\cite{van2002reproduction,heesterbeek2002brief} provided system~\eqref{eqn:vectorhost} satisfies the conditions required by van den Driessche and Watmough theory. In particular, any solution starting in the nonnegative orthant remains in this region for all future times. The matrices representing the appearance of new infections and the transfer of individuals out of the infectious compartments are, respectively,
\[
\mathcal{F}=
\begin{pmatrix}
\beta_h \frac{I_v}{N_h} S_h +\epsilon_hS_h\\
\beta_v \frac{I_h}{N_h} S_v + p_v \mu_v I_v
\end{pmatrix},
\quad
\mathcal{V}=
\begin{pmatrix}
(\gamma_h + \mu_h) I_h \\
\mu_v I_v
\end{pmatrix}.
\]

The Jacobians of $\mathcal{F}$ and $\mathcal{V}$ with respect to the infectious compartments $I_h$ and $I_v$, evaluated at the disease-free equilibrium $x_{\text{DFE}} = (N_h, 0, 0, N_v, 0)$, are

\begin{displaymath}
F = \begin{pmatrix}0 & \beta_h\\ \beta_v \frac{N_v}{N_h} & p_v\mu_v\end{pmatrix},
\quad
V = \begin{pmatrix} \gamma_h+\mu_h & 0\\ 0 & \mu_v\end{pmatrix}.
\end{displaymath}

The basic reproduction number of system~\eqref{eqn:vectorhost} is

\begin{equation}
\label{eqn:R0}
\mathcal{R}_{0} = \rho(FV^{-1}) = \frac{p_v}{2} + \sqrt{\left(\frac{p_v}{2}\right)^2 + \mathcal{R}^{2}},
\end{equation}

where $\mathcal{R}$ denotes the reproduction number without vertical transmission, defined as

\begin{equation}
\mathcal{R} = \sqrt{\frac{\beta_h \beta_v}{(\gamma_h + \mu_h)\mu_v}\frac{N_v}{N_h}}.
\end{equation}

Thus, $\mathcal{R}_0$ may exceed one even if $\mathcal{R} < 1$ when vertical transmission is sufficiently strong. This highlights that dengue control requires addressing multiple transmission routes, a point we explore further in the paper.

Equation~\eqref{eqn:R0} agrees with Adams and Boots~\cite{adams2010important} and related work. For convenience, we can also express equation~\eqref{eqn:R0} as

\begin{equation}
\label{eqn:R}
\mathcal{R} = \sqrt{\mathcal{R}_{0}(\mathcal{R}_{0} - p_v)}.
\end{equation}

\begin{theorem}[Stability of the Disease-Free Equilibrium]
Consider an infectious disease model~\eqref{eqn:vectorhost} with a basic reproduction number~\eqref{eqn:R0}. Let $x_{DFE} = (N_h, 0, 0, N_v, 0)$ denote the disease-free equilibrium. Then:
\begin{itemize}
\item If $\mathcal{R}_0<1$, the disease-free equilibrium $x_{DFE}$ is locally asymptotically stable.
\item If $\mathcal{R}_0>1$, the disease-free equilibrium $x_{DFE}$ is unstable.
\end{itemize}
\end{theorem}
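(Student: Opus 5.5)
The plan is a direct linearization at $x_{DFE}$, cross-checked against the van den Driessche--Watmough threshold theorem~\cite{van2002reproduction}.

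First, a standing assumption. For $x_{DFE}$ to be an equilibrium of~\eqref{eqn:vectorhost} we need $\epsilon_h=0$; otherwise $\dot I_h = \epsilon_h N_h>0$ there. So I will read the theorem with the exogenous term switched off. This is consistent with the computation of $F$, where $\epsilon_h$ does not appear.

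Next, reduce the system. Since $N_h$ and $N_v$ are constant, I eliminate $S_h = N_h - I_h - R_h$ and $S_v = N_v - I_v$. This gives a three-dimensional system in $(I_h,R_h,I_v)$, with $x_{DFE}$ corresponding to the origin. I then compute the Jacobian at the origin and order the variables as $(I_h,I_v,R_h)$.

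In the $I_h$ and $I_v$ equations, every term involving $R_h$ is multiplied by an infected variable. Hence the derivatives of $\dot I_h$ and $\dot I_v$ with respect to $R_h$ vanish at the origin. The Jacobian is therefore block lower-triangular. The $R_h$ block is the scalar $-(\omega_h+\mu_h)<0$, and the infected block is
\[
J_I = F - V = \begin{pmatrix} -(\gamma_h+\mu_h) & \beta_h \\ \beta_v \frac{N_v}{N_h} & (p_v-1)\mu_v \end{pmatrix}.
\]
Local stability of $x_{DFE}$ is thus decided by the spectrum of $J_I$ alone.

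To analyze $J_I$, note that since $p_v\le 1/2<1$, its trace is negative. Its determinant is
\[
\det J_I = (\gamma_h+\mu_h)(1-p_v)\mu_v - \beta_h\beta_v\frac{N_v}{N_h} = (\gamma_h+\mu_h)\mu_v\,\bigl(1-p_v-\mathcal{R}^2\bigr).
\]
For a real $2\times 2$ matrix with negative trace, both eigenvalues have negative real part iff $\det J_I>0$. If instead $\det J_I<0$, the eigenvalues are real and of opposite sign, so one is positive.

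It then remains to show that $\mathcal{R}^2<1-p_v$ iff $\mathcal{R}_0<1$, and $\mathcal{R}^2>1-p_v$ iff $\mathcal{R}_0>1$. I use~\eqref{eqn:R}, $\mathcal{R}^2=\mathcal{R}_0(\mathcal{R}_0-p_v)$. By~\eqref{eqn:R0}, $\mathcal{R}_0\ge p_v$, and the map $t\mapsto t(t-p_v)$ is strictly increasing on $[p_v/2,\infty)$ and equals $1-p_v$ at $t=1$. Both equivalences follow by monotonicity.

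As a cross-check, this is exactly the statement $s(F-V)<0 \iff \rho(FV^{-1})<1$ from Theorem~2 of~\cite{van2002reproduction}. It applies because $F\ge 0$ and $V$ is a positive diagonal matrix, so $V$ is a nonsingular M-matrix. Either route gives the result.

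I expect no step to be hard; the points needing care are these:
\begin{itemize}
\item Handling $\epsilon_h$: the theorem needs $\epsilon_h=0$, and without it there is no disease-free equilibrium.
\item Verifying the block-triangular structure after substituting $S_h$ and $S_v$.
\item Noting that the paper's $S_v$ equation uses $I_h/N_h$. This gives the entry $\beta_v N_v/N_h$, consistent with $F$.
\item Checking the monotonicity argument linking the determinant sign to $\mathcal{R}_0$.
\end{itemize}
The borderline case $\mathcal{R}_0=1$ is excluded by the statement.
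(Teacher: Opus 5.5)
Your proof is correct and follows the paper's argument. The paper computes the full $5\times 5$ Jacobian at $x_{DFE}$ and factors out the eigenvalues $-\mu_h$, $-(\omega_h+\mu_h)$, $-\mu_v$; the remaining quadratic $g(\lambda)$ is the characteristic polynomial of your block $J_I=F-V$, with negative trace and constant term $D=(\gamma_h+\mu_h)\mu_v(1-p_v-\mathcal{R}^2)$, and the paper then converts $D>0$ into $\mathcal{R}_0<1$ via $\mathcal{R}^2=\mathcal{R}_0(\mathcal{R}_0-p_v)$. Your conservation-law reduction only drops the two transverse eigenvalues $-\mu_h,-\mu_v$, which are harmless, and your monotonicity step and the requirement $\epsilon_h=0$ make explicit what the paper leaves implicit.
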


\begin{proof}
Let $J_{DFE}$ denote the Jacobian matrix of the right-hand side of system~\eqref{eqn:vectorhost}, evaluated at the disease-free equilibrium. The characteristic polynomial is given by
\begin{equation}
\label{eqn:char_pol}
p_{DFE}(\lambda) = \det(J_{DFE} - \lambda I) = -(\mu_h + \lambda)(\omega_h + \mu_h + \lambda)(\mu_v + \lambda)\,g(\lambda),
\end{equation}
where
\begin{displaymath}
\begin{split}
g(\lambda) &= \lambda^2 + \big[\gamma_h + \mu_h + (1 - p_v)\mu_v\big] \lambda + (\gamma_h + \mu_h)\mu_v(1 - p_v) - \beta_h \beta_v \frac{N_v}{N_h} \\
&= \lambda^2 - T\lambda + D,
\end{split}
\end{displaymath}
with $T = -\left[\gamma_h + \mu_h + (1 - p_v)\mu_v\right] < 0$ for all parameter values.

Therefore, the disease-free equilibrium $x_{DFE}$ is locally asymptotically stable if and only if $D > 0$, which holds precisely when
\[
1 - p_v > \mathcal{R}^2 = \mathcal{R}_0(\mathcal{R}_0 - p_v),
\]
which is equivalent to $\mathcal{R}_0 < 1$. Conversely, if $\mathcal{R}_0 > 1$, then $D < 0$, and the disease-free equilibrium is unstable.
\end{proof}

Let us respectively denote 

\begin{equation}
\label{eqn:quotients}
\mathcal{R}_v=\frac{\beta_v}{\mu_v},\quad Q=\frac{\gamma_h}{\omega_h+\mu_h},\quad C=\frac{N_v}{N_h}
\end{equation}

the vector reproductive number, the immunity residence quotient, and the population ratio. Notice that $Q>1$ means the population is gaining more immune individuals than is losing, while $Q<1$ means individuals exit the immune class faster than they enter it. The system~\eqref{eqn:vectorhost} has and endemic equilibrium $x_{EE}=(S_h^{EE},I_h^{EE},R_h^{EE},S_v^{EE},I_v^{EE})$ given by

\begin{equation}
\label{eqn:ee}
\begin{split}
S_{h}^{EE}&=N_h\left(1-(1+Q)\frac{\mathcal{R}^{2}-(1-p_v)}{\mathcal{R}^{2}(1+Q)+\mathcal{R}_v}\right)\\
I_{h}^{EE}&=N_h\frac{\mathcal{R}^{2}-(1-p_v)}{\mathcal{R}^{2}(1+Q)+\mathcal{R}_v}\\
R_{h}^{EE}&=N_h\frac{\mathcal{R}^{2}-(1-p_v)}{\mathcal{R}^{2}(1+Q)+\mathcal{R}_v}Q\\
S_{v}^{EE}&=N_v\frac{(1-p_v)(\mathcal{R}^{2}(1+Q)+\mathcal{R}_v)}{\mathcal{R}_v(\mathcal{R}^{2}-(1-p_v))+(1-p_v)(\mathcal{R}^{2}(1+Q)+\mathcal{R}_v)}\\
I_{v}^{EE}&=N_v\frac{\mathcal{R}_v(\mathcal{R}^{2}-(1-p_v))}{\mathcal{R}_v(\mathcal{R}^{2}-(1-p_v))+(1-p_v)(\mathcal{R}^{2}(1+Q)+\mathcal{R}_v)}\\
\end{split}
\end{equation}

It is apparent that the existence of the endemic equilibrium \eqref{eqn:ee} requires $\mathcal{R}_{0} > 1$. Also, model~\eqref{eqn:vectorhost} has no backward bifurcation since the derivative of the infectious populations equilibria with respect to the basic reproductive number evaluated at $\mathcal{R}_{0}=1$ is positive:
 
\begin{displaymath}
\begin{split}
\left.\frac{dI_{h}^{EE}}{d\mathcal{R}_{0}}\right|_{\mathcal{R}_{0}=1} &= N_{h}\frac{2-p_v}{(1-p_v)(1+Q)+\mathcal{R}_v}>0,\\
\left.\frac{dI_{v}^{EE}}{d\mathcal{R}_{0}}\right|_{\mathcal{R}_{0}=1} &= \frac{\mathcal{R}_{v}C(1-p_v)(2-p_v)}{(1-p_v)(1+Q)+\mathcal{R}_v}>0.\\
\end{split}
\end{displaymath}

On the other hand, the endemic prevalence increases with the square of the reproduction number $\mathcal{R}$, implying that increases in the contact rates $\beta_h$, $\beta_v$, and the population ratio $C = \frac{N_v}{N_h}$ can lead to higher infection prevalence in both populations. An effective strategy to reduce the reproduction number is to lower $\beta_v$ by protecting infectious hosts from mosquito bites. On the other hand, the vertical transmission fraction $p_v$ lowers the threshold for endemicity. As $p_v$ increases, the disease can persist at lower values of $\mathcal{R}$. Ecologically, this implies that mosquito populations can act as reservoirs, sustaining transmission even when host-to-vector cycles are weak. This highlights the importance of targeting vertical transmission in control strategies. Finally, the immunity residence time $Q$, which reflects the balance between recovery and loss of immunity (or death), directly influences both $S_{h}^{EE}$ and $R_{h}^{EE}$. Larger values of $Q$ (longer-lasting immunity) result in higher equilibrium levels of recovered individuals and lower levels of susceptibles. This indicates that longer immune memory reduces the pool of susceptible hosts and lowers the disease burden, even when transmission persists. We conclude that vaccination or other strategies that extend the duration of immunity could be effective in reducing endemic prevalence. Next, we show that $\mathcal{R}_{0} > 1$ is also a condition for the endemic equilibrium local stability.

\begin{theorem}[Global Stability of the Endemic Equilibrium]
\label{theo:ee}
Consider system~\eqref{eqn:vectorhost}, and let \( x_{\mathrm{EE}} \) denote the endemic equilibrium given in~\eqref{eqn:ee}. Suppose the basic reproduction number \(\mathcal{R}_0\) is defined as in~\eqref{eqn:R0}. Then, the endemic equilibrium \( x_{\mathrm{EE}} \) is globally asymptotically stable in the interior of the feasible region whenever \( \mathcal{R}_0 > 1 \).
\end{theorem}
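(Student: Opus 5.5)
We would follow the geometric approach of Li and Muldowney, as Esteva and Vargas~\cite{esteva1998analysis} did for the constant-population vector--host model. Our candidate Lyapunov functions of Volterra type appear to break down here because of the waning term $\omega_h R_h$ and the vertical-transmission feedback $p_v\mu_v I_v$. Throughout we take $\epsilon_h=0$, since \eqref{eqn:ee} is the equilibrium of that case.

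\textbf{Reduction.} Since $N_h$ and $N_v$ are constant, we would pass to the fractions $s=S_h/N_h$, $i=I_h/N_h$, $v=I_v/N_v$, and eliminate $R_h=N_h(1-s-i)$ and $S_v=N_v(1-v)$. This gives a three-dimensional system on the compact, positively invariant set
\[
\Gamma=\{(s,i,v): s,i,v\ge 0,\ s+i\le 1,\ v\le 1\}.
\]
In these variables,
\[
\dot s=\mu_h(1-s)-\beta_h C v s+\omega_h(1-s-i),\qquad
\dot i=\beta_h C v s-(\gamma_h+\mu_h)i,\qquad
\dot v=\beta_v\tfrac{N_h}{N_v}\, i(1-v)-(1-p_v)\mu_v v .
\]
Here we would fix the scaling consistently with \eqref{eqn:vectorhost}. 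The steps are then as follows.

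\begin{enumerate}
\item Positive invariance and dissipativity of $\Gamma$ are immediate from the vector field on its boundary.
\item Uniform persistence when $\mathcal{R}_0>1$. The only boundary equilibrium is the DFE, which is unstable by the preceding theorem. The DFE is also isolated and acyclic in the boundary flow, which converges to it. The persistence theorem of Freedman--Ruan--Tang (or Thieme) then gives a compact absorbing set $K$ in the interior of $\Gamma$.
\item Uniqueness of the interior equilibrium. This follows from the explicit formula \eqref{eqn:ee}: solving $\dot v=0$ for $v$ in terms of $i$ and substituting yields a single positive root exactly when $\mathcal{R}^2>1-p_v$, i.e.\ when $\mathcal{R}_0>1$.
\end{enumerate}

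\textbf{Bendixson criterion.} We then compute the second additive compound $J^{[2]}$ of the Jacobian $J(s,i,v)$ of the reduced system. We choose $P(x)=\mathrm{diag}(1,\,i/v,\,i/v)$ and set $B=P_fP^{-1}+PJ^{[2]}P^{-1}$, written in the block form $B=\begin{pmatrix}B_{11}&B_{12}\\ B_{21}&B_{22}\end{pmatrix}$. Using the Lozinski\u{\i} measure induced by $|(u_1,u_2,u_3)|=\max\{|u_1|,|u_2|+|u_3|\}$, we bound
\[
\mu(B)\le\max\{g_1,g_2\},\qquad g_1=\mu_1(B_{11})+|B_{12}|,\quad g_2=|B_{21}|+\mu_1(B_{22}).
\]
The intended simplification uses the equilibrium identities $\dot i/i=\beta_h C v s/i-(\gamma_h+\mu_h)$ and $\dot v/v=\beta_v\frac{N_h}{N_v}\,i(1-v)/v-(1-p_v)\mu_v$. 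With these, both $g_1$ and $g_2$ should be bounded by $\dot i/i-\mu_h$ (or $\dot v/v$ minus a positive rate) on $K$. Averaging along orbits, $\frac1t\int_0^t \dot i/i\to 0$ because $i$ is bounded away from zero on $K$. This yields
\[
\bar q=\limsup_{t\to\infty}\sup_{x_0\in K}\frac1t\int_0^t\mu(B(x(\tau,x_0)))\,d\tau<0,
\]
which rules out periodic orbits, homoclinic loops and heteroclinic cycles. The Li--Muldowney theorem then gives global asymptotic stability of $x_{EE}$ in the interior.

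\textbf{Main obstacle.} The hard step is the estimate of $g_1$. The waning term contributes $\omega_h$ to $B_{12}$ (through $\partial\dot s/\partial i=-\omega_h$), and the term $-p_v\mu_v$ weakens the diagonal of $B_{22}$. Neither appears in the Esteva--Vargas computation. We would first try absorbing $\omega_h$ against the diagonal entry $-(\mu_h+\omega_h+\beta_h C v)-(\gamma_h+\mu_h)$ of $B_{11}$, which dominates it. If the $p_v$ term spoils the bound on $g_2$, we would modify the weight to $P=\mathrm{diag}(1,\,i/v,\,\kappa\, i/v)$ with a constant $\kappa$ tuned to balance the two blocks. If no choice of weights works uniformly in the parameters, the fallback is to prove the theorem under an explicit sufficient condition, for instance $\omega_h$ small relative to $\gamma_h+\mu_h$. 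That condition is satisfied on the ranges of Table~\ref{tab:parameters}, and local stability for all $\mathcal{R}_0>1$ would follow from Routh--Hurwitz on the $3\times3$ Jacobian at $x_{EE}$.
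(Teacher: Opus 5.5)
Your route differs from the paper's, and of the two it is the only one that addresses global behaviour. The paper's proof is purely local. It factors the characteristic polynomial at $x_{EE}$ as $-(\lambda+\mu_v)(\lambda+\mu_h)$ times a cubic; the two linear factors come from the conserved totals. It writes the cubic's coefficients in terms of $s_1,s_2,s_3,\Delta_1,\Delta_2$ and checks the Routh--Hurwitz conditions, using $s_2s_3-\Delta_1>0$ at the equilibrium (indeed $\Delta_1=(\gamma_h+\mu_h)(1-p_v)\mu_v$ there). It concludes local asymptotic stability and absence of a Hopf bifurcation for every $\mathcal R_0>1$. It contains no persistence argument, no absorbing set and no exclusion of periodic orbits, so the word ``global'' in the statement is not supported by it. In other words, the paper proves exactly your fallback. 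Your Li--Muldowney plan (persistence, unique interior equilibrium, Bendixson-type criterion) yields a genuine global statement on the interior of $\Gamma$, at the price of a parameter restriction.

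You have, however, put the obstruction in the wrong place, although the fallback condition you guess is the right one. With the scaling of \eqref{eqn:vectorhost}, $\dot v=\beta_v i(1-v)-(1-p_v)\mu_v v$; your extra factor $N_h/N_v$ only rescales $\beta_v$. The entry $a_{12}=\partial\dot s/\partial i=-\omega_h$ sits in position $(2,3)$ of $J^{[2]}$, i.e.\ inside $B_{22}$, while $B_{12}$ contains only $\beta_h C s\,v/i$ twice. Carrying out your computation with $P=\mathrm{diag}(1,i/v,i/v)$ gives
\[
g_1=\frac{\dot i}{i}-\mu_h-\omega_h-\beta_h C v,\qquad
g_2=\frac{\dot i}{i}-\beta_v i+\max\{-\mu_h-\omega_h,\ \omega_h-\gamma_h-\mu_h\}.
\]
The reason is that $|B_{21}|=\beta_v(1-v)i/v=\dot v/v+(1-p_v)\mu_v$ cancels both the $-\dot v/v$ from $P_fP^{-1}$ and the $-(1-p_v)\mu_v$ on the diagonal of $B_{22}$. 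So $g_1$ is never a problem, and vertical transmission drops out entirely. The only obstruction is the $+\omega_h$ in the second column of $B_{22}$, whose diagonal $a_{22}+a_{33}$ has no $-\omega_h$ to absorb it. Hence $\bar q\le-\min\{\mu_h+\omega_h,\ \gamma_h+\mu_h-\omega_h\}<0$ whenever $\omega_h<\gamma_h+\mu_h$, which holds by a wide margin on the ranges of Table~\ref{tab:parameters}.

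Your weight $\mathrm{diag}(1,i/v,\kappa i/v)$ with $\kappa\ge 1$ relaxes this condition but does not remove it. One then needs $\omega_h/(\gamma_h+\mu_h)<\kappa<1+(\mu_h+\omega_h)/(\beta_h C)$. For $\kappa<1$ the coefficient of $\dot i/i$ in $g_1$ becomes $1/\kappa$ and no longer matches $g_2$. So what you would actually prove is a conditional global theorem; neither your argument nor the paper's establishes the unconditional claim.

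One smaller point concerns the persistence step. You need not just instability of the DFE but that its stable set does not meet the interior of $\Gamma$. This follows from the positive Perron eigenvector of the linearized $(i,v)$ block, which has a positive eigenvalue exactly when $\mathcal R^2>1-p_v$.
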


\begin{proof}
Define 
\[
s_1 = \omega_h + \mu_h, \quad 
s_2 = (1-p_v)\mu_v + \beta_v \frac{I_h^{EE}}{N_h}, \quad 
s_3 = \gamma_h + \mu_h + \beta_h \frac{I_v^{EE}}{N_h},
\]
\[
\Delta_1 = \left(\beta_h \frac{S_h^{EE}}{N_h}\right)\left(\beta_v\frac{S_v^{EE}}{N_h}\right), 
\quad 
\Delta_2 = \gamma_h \beta_h \frac{I_v^{EE}}{N_h}.
\]
Let \(J_{EE}\) be the Jacobian matrix of the right-hand side of system~\eqref{eqn:vectorhost} evaluated at the endemic equilibrium. The corresponding characteristic polynomial is
\begin{equation*}
p_{EE}(\lambda) = \det(J_{EE} - \lambda I) = -(\lambda + \mu_v)(\lambda + \mu_h)(\lambda^3 + A\lambda^2 + B\lambda + C),
\end{equation*}
where
\begin{eqnarray*}
A & = & s_1 + s_2 + s_3, \\
B & = & s_1(s_2 + s_3) + (s_2 s_3 - \Delta_1) + \Delta_2, \\
C & = & s_1(s_2 s_3 - \Delta_1) + s_2 \Delta_2.
\end{eqnarray*}

We claim that \(A\), \(B\), and \(C\) are positive whenever \(s_2 s_3 - \Delta_1 > 0\). This implies that the polynomial \(p_{EE}(\lambda)\) has no positive real roots. Moreover, there exist positive parameters \(r, T, D\) such that
\begin{equation*}
\lambda^3 + A\lambda^2 + B\lambda + C = (\lambda + r)(\lambda^2 + T\lambda + D).
\end{equation*}
Therefore, the endemic equilibrium \(x_{EE}\) cannot undergo a Hopf bifurcation. Indeed, the necessary condition for a Hopf bifurcation is \(T=0\), which holds if and only if \(AB-C=0\). However,
\begin{eqnarray*}
AB-C & = & (s_1 + s_2 + s_3)\big[s_1(s_2 + s_3) + (s_2 s_3 - \Delta_1) + \Delta_2 \big] \\
& & - \big[s_1(s_2 s_3 - \Delta_1) + s_2 \Delta_2\big] \\
& = & s_1(s_2 + s_3)(s_1 + s_2 + s_3) + (s_2 + s_3)(s_2 s_3 - \Delta_1) + (s_1 + s_3)\Delta_2\\
&=& > 0.
\end{eqnarray*}

Hence, the endemic equilibrium \(x_{EE}\) is locally asymptotically stable if and only if \(\mathcal{R}^2 > 1-p_v\), a condition equivalent to \(\mathcal{R}_0 > 1\).
\end{proof}

Note that $\mathcal{R}_0$ does not depend on $\epsilon_h$ or $\omega_h$. The disease-free equilibrium exists only if $\epsilon_h = 0$, while the stability of the endemic equilibrium is unaffected for small values of $\epsilon_h$ as shown below.

\begin{theorem}[Stability of the Endemic Equilibrium under Perturbations]
\label{theo:eestar}
The endemic equilibrium~\eqref{eqn:ee} remains stable under small constant additive perturbations to the host force of infection,
\[
\lambda_h = \beta_h \tfrac{I_v}{N_h} + \epsilon_h,
\]
provided $\mathcal{R}_0 > 1$.
\end{theorem}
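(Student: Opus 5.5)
The plan is to treat $\epsilon_h$ as a small regular perturbation parameter and combine the implicit function theorem with continuity of eigenvalues.

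At $\epsilon_h=0$, Theorem~\ref{theo:ee} gives the endemic equilibrium $x_{EE}$ with Jacobian $J_{EE}$. Its characteristic polynomial factors as $-(\lambda+\mu_v)(\lambda+\mu_h)(\lambda^3+A\lambda^2+B\lambda+C)$, and under $\mathcal{R}_0>1$ all five roots have strictly negative real part.

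First, reduce the system using the invariant totals $N_h=S_h+I_h+R_h$ and $N_v=S_v+I_v$. This leaves a three-dimensional system in $(I_h,R_h,I_v)$, written $\dot y=G(y,\epsilon_h)$, where $G$ is smooth in $(y,\epsilon_h)$ and $\partial_{\epsilon_h}G=(N_h-I_h-R_h,0,0)^{T}$. The Jacobian $D_yG$ at $(y_{EE},0)$ has eigenvalues given by the roots of the cubic factor. These are nonzero, so $D_yG(y_{EE},0)$ is invertible. The implicit function theorem then gives $\delta>0$ and a smooth branch $y^*(\epsilon_h)$ with $G(y^*(\epsilon_h),\epsilon_h)=0$ for $|\epsilon_h|<\delta$, and $y^*(0)=y_{EE}$. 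Since $y_{EE}$ lies in the interior of the feasible region, so does $y^*(\epsilon_h)$ for small $\epsilon_h$.

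Second, the entries of $D_yG(y^*(\epsilon_h),\epsilon_h)$ depend continuously on $\epsilon_h$. Hence so do the coefficients of its characteristic polynomial, and therefore its roots. All roots have negative real part at $\epsilon_h=0$, so by shrinking $\delta$ they stay in the open left half-plane. The perturbed equilibrium is then locally asymptotically stable.

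For a more explicit argument, I would recompute $A,B,C$ with $s_3$ replaced by $s_3^{\epsilon}=\gamma_h+\mu_h+\beta_h I_v^*/N_h+\epsilon_h$ and the $S_h$-dependent terms evaluated at the new equilibrium. I would then check the Routh--Hurwitz conditions $A,C>0$ and $AB-C>0$, using the same grouping as in the proof of Theorem~\ref{theo:ee}.

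To support the claim that the infectious steady states become larger, I would differentiate $G(y^*(\epsilon_h),\epsilon_h)=0$ at $\epsilon_h=0$. This gives $\frac{dy^*}{d\epsilon_h}=-D_yG^{-1}\,(S_h^{EE},0,0)^T$, and I would try to show its $I_h$ and $I_v$ components are positive via Cramer's rule and the sign structure of the Jacobian.

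The main obstacle is that the statement is only local: "small" means $|\epsilon_h|<\delta$ with $\delta$ not quantified. Making it explicit, or showing $s_2s_3^\epsilon-\Delta_1^\epsilon>0$ persists for all $\epsilon_h\ge0$, requires tracking how $S_h^*$ and $S_v^*$ change. For that I would first try the monotonicity of the equilibrium equation in $\epsilon_h$, since $\epsilon_h$ only increases $\lambda_h$.
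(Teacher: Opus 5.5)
Your proposal is correct, but it argues differently from the paper. The paper works with explicit formulas. It writes the perturbed equilibrium in closed form: $I_v^\star$ as the positive root of a quadratic, then $S_v^\star=N_v-I_v^\star$, $I_h^\star=\frac{1-p_v}{\mathcal{R}_v}\frac{I_v^\star}{S_v^\star}N_h$ and $R_h^\star=QI_h^\star$. It shows $dI_v^\star/d\epsilon_h>0$ at $\epsilon_h=0$, which gives the orderings $I_h^\star\ge I_h^{EE}$, $I_v^\star\ge I_v^{EE}$, $S_h^\star\le S_h^{EE}$, $S_v^\star\le S_v^{EE}$. From these it deduces $\tilde s_2\tilde s_3-\tilde\Delta_1\ge s_2s_3-\Delta_1>0$, which feeds the Routh--Hurwitz grouping from Theorem~\ref{theo:ee}.

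You instead use persistence of a hyperbolic equilibrium, via the implicit function theorem plus continuity of eigenvalues. The factors $\lambda+\mu_h$ and $\lambda+\mu_v$ coming from the conserved totals do not involve $\epsilon_h$, so your reduction to $(I_h,R_h,I_v)$ loses nothing. Your route needs neither the closed form nor any monotonicity, and it would survive any smooth small perturbation of the vector field. The cost is an unquantified $\delta$ and no sign information.

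Your Cramer's-rule step supplies the sign information, and it goes through. Solving $D_yG(y_{EE},0)\,v=-(S_h^{EE},0,0)^{T}$ gives $v_1=S_h^{EE}s_1s_2/\bigl(s_1(s_2s_3-\Delta_1)+s_2\Delta_2\bigr)>0$, $v_2=\gamma_hv_1/s_1>0$ and $v_3=\beta_vS_v^{EE}v_1/(N_hs_2)>0$. This recovers the paper's claim that the infectious steady states increase.

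Finally, the obstacle you flag can be avoided. At any interior equilibrium with $\epsilon_h\ge0$, the equation $\dot I_h=0$ gives $\beta_hS_h^\star/N_h\le(\gamma_h+\mu_h)I_h^\star/I_v^\star$, and $\dot I_v=0$ gives $\beta_vS_v^\star/N_h=(1-p_v)\mu_vI_v^\star/I_h^\star$. Hence $\tilde\Delta_1\le(\gamma_h+\mu_h)(1-p_v)\mu_v<\tilde s_2\tilde s_3$. So the Routh--Hurwitz conditions hold for every $\epsilon_h\ge0$, without comparing $S_h^\star$ and $S_v^\star$ to their unperturbed values. This is shorter than both your perturbative argument and the paper's comparison.
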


\begin{proof}
Assume $\mathcal{R}_0 > 1$ and let 
\[
x_{EE}^\star = \left(S_h^\star, I_h^\star, R_h^\star, S_v^\star, I_v^\star\right)
\]
denote the endemic equilibrium of system~\eqref{eqn:vectorhost} when $\epsilon_h > 0$ is small. Its components satisfy
\begin{eqnarray*}
I_v^\star & = & \dfrac{I_v^{EE}}{2} - \frac{\mathcal{R}_v N_v}{2\mathcal{R}^2}\dfrac{\epsilon_h}{\mu_h + \gamma_h} \\
& & + \sqrt{\left(\dfrac{I_v^{EE}}{2} - \frac{\mathcal{R}_v N_v}{2\mathcal{R}^2}\dfrac{\epsilon_h}{\mu_h + \gamma_h}\right)^2 
+ \left(\frac{\mathcal{R}_v N_v}{\mathcal{R}^2}\right)\dfrac{\mathcal{R}_v N_v}{\mathcal{R}_v + \left(1-p_v\right)\left(1+Q\right)}\dfrac{\epsilon_h}{\mu_h + \gamma_h} }, \\
S_v^\star & = & N_v - I_v^\star, \\
I_h^\star & = & \dfrac{1-p_v}{R_v} \dfrac{I_v^\star}{S_v^\star}N_h, \\
R_h^\star & = & Q I_h^\star, \\
S_h^\star & = & N_h - \left(1+Q\right)I_h^\star. 
\end{eqnarray*}

To study how $I_v^\star$ depends on $\epsilon_h$, consider the derivative at $\epsilon_h=0$:
\[
\left.\dfrac{dI_v^\star}{d\epsilon_h}\right|_{\epsilon_h = 0} 
= N_v \dfrac{\mathcal{R}_v}{\mathcal{R}^2}\dfrac{1-p_v}{\mathcal{R}^2 - \left(1-p_v\right)}\dfrac{1}{\mu_h + \gamma_h} > 0.
\]
Thus, $I_v^\star$ increases with $\epsilon_h$, implying $I_v^\star \geq I_v^{EE}$ for $\epsilon_h > 0$. 

By the identity 
\[
I_h^\star  =  \dfrac{1-p_v}{\mathcal{R}_v} \dfrac{I_v^\star}{S_v^\star}N_h,
\]
$I_h^\star$ is also an increasing function of $\epsilon_h$ near zero. Ecologically, small perturbations $\epsilon_h$ raise the infected populations ($I_h^\star$ and $I_v^\star$) and reduce the susceptible populations ($S_h^\star$ and $S_v^\star$). Formally, this yields the inequalities
\begin{equation}\label{eq:perturbed-inequalities}
I_h^\star \geq I_h^{EE}, \quad I_v^\star \geq I_v^{EE}, \quad 
S_h^\star \leq S_h^{EE}, \quad S_v^\star \leq S_v^{EE}.
\end{equation}

As in the proof of Theorem~\ref{theo:ee}, stability of $x_{EE}^\star$ follows if 
$\tilde{s}_2 \tilde{s}_3 - \tilde{\Delta}_1 >0$, where
\[
\tilde{s}_2 = \left(1-p_v\right)\mu_v + \beta_v \dfrac{I_h^\star}{N_h}, \qquad 
\tilde{s}_3 = \left(\mu_h + \gamma_h + \epsilon_h\right) + \beta_h \dfrac{I_v^\star}{N_h},
\]
\[
\tilde{\Delta}_1 = \left(\beta_h \dfrac{S_h^\star}{N_h}\right)\left(\beta_v \dfrac{S_v^\star}{N_h}\right).
\]
From inequalities~\eqref{eq:perturbed-inequalities} it follows that 
$\tilde{s}_2 \geq s_2$, $\tilde{s}_3 \geq s_3$, and $\tilde{\Delta}_1 < \Delta_1$. Therefore,
\[
\tilde{s}_2 \tilde{s}_3 - \tilde{\Delta}_1 \geq s_2 s_3 - \Delta_1 > 0,
\]
which shows that the characteristic polynomial of the perturbed equilibrium $x_{EE}^\star$ has no positive real roots. Consequently, the perturbed endemic equilibrium $x_{EE}^\star$ remains locally stable and does not undergo a Hopf bifurcation.
\end{proof}

\subsection{Sobol sensitivity analysis}
\label{subsec:sobol}

We investigate how variations in the parameters of the macroscopic model~\eqref{eqn:vectorhost} influence key epidemiological characteristics of dengue dynamics. To characterize transient epidemic behavior, we compute the maximum, minimum, standard deviation, and area under the curve (AUC) of the host and vector trajectories. These indicators quantify outbreak magnitude and variability, enabling us to assess the contribution of each parameter. We then combine these measures with sensitivity analysis to rank parameters by relative importance across the ranges specified in Table~\ref{tab:parameters}.

\begin{remark}[Normalization by $N_h$]
Because the human population size $N_h$ is constant, dividing system~\eqref{eqn:vectorhost} by $N_h$ yields equations in terms of the proportions $s_h$, $i_h$, and $r_h$, together with the ratio $C=N_v/N_h$. The normalized model retains the same bilinear incidence structure, satisfies $s_h+i_h+r_h=1$, and preserves positivity, the invariant region, and threshold quantities such as $\mathcal{R}_0$. Thus, normalization by $N_h$ merely rescales the variables without altering the qualitative dynamics. In practice, the normalized formulation is also more convenient for numerical simulation.
\end{remark}

We divide the system of equations~\eqref{eqn:vectorhost} by $N_h$ and consider the input parameter vector
\[
X = (C, \mu_h, \mu_v, \gamma_h, \beta_h, \beta_v, \omega_h, \epsilon_h, p_v)^T,
\]
where $C=\tfrac{N_v}{N_h}\in[10^{-1},10^{4}]$, and define the model output as
\[
Y = \text{op}_{t \in [0,T]} \, (S_h(t), I_h(t), R_h(t), S_v(t), I_v(t))^T \in \mathbb{R}^5,
\]
where the operator \(\text{op} \in \{\max, \min, \mathrm{std}, \mathrm{auc}\}\) applies the chosen statistic over the time interval \([0,365]\) to each state variable.

To assess parameter influence, we use the \textit{multivariate first-order Sobol sensitivity index} and the \textit{total effect index} introduced by Saltelli \textit{et al.}~\cite{saltelli2010variance}. Let
\[
Y = f(X_1, X_2, \ldots, X_k) \in \mathbb{R}^d
\]
denote the model output vector for \(k\) independent inputs. The first-order index for input \(X_i\) is defined as
\begin{equation}
S_i = \frac{\mathrm{Tr}\!\left( \mathrm{Var}_{X_i}\!\left( \mathbb{E}_{X_{\sim i}}[Y \mid X_i] \right) \right)}{\mathrm{Tr}\!\left( \mathrm{Var}(Y) \right)},
\end{equation}
where \(X_{\sim i}\) denotes all inputs except \(X_i\), and \(\mathrm{Tr}(\cdot)\) sums variances across output components. Thus, \(S_i\) measures the proportion of output variance explained solely by \(X_i\).

The total effect index is
\begin{equation}
S_{T_i} = \frac{\mathrm{Tr}\!\left( \mathbb{E}_{X_{\sim i}}  \!\left(  \mathrm{Var}_{X_i}  [Y \mid X_{\sim i}] \right) \right)}{\mathrm{Tr}\!\left( \mathrm{Var}(Y) \right)}
= 1-\frac{\mathrm{Tr}\!\left( \mathrm{Var}_{X_{\sim i}}\!\left( \mathbb{E}_{X}[Y \mid X_{\sim i}] \right) \right)}{\mathrm{Tr}\!\left( \mathrm{Var}(Y) \right)},
\end{equation}
which captures the share of variance that disappears if \(X_i\) is fixed, accounting for both its direct effect and interactions.
 Numerical experiments are implemented with the \textit{SALib} Python package of Herman and Usher~\cite{Herman2017}. To approximate the onset of local transmission, the initial conditions are specified as
\begin{equation}
\label{eq:ic_dengue_alt}
\begin{aligned}
(S_h(0), I_h(0), R_h(0), S_v(0), I_v(0)) 
= \bigl(& (1-10^{-5})N_h,\; 10^{-5}N_h,\; 0,\\
        & (1-10^{-4})C N_h,\; 10^{-4}C N_h \bigr),
\end{aligned}
\end{equation}
which corresponds to introducing a small fraction of infectious individuals in both the human and vector populations.

Parameter values are drawn uniformly from the biologically feasible ranges summarized in Table~\ref{tab:model}. Host and vector population sizes (\(N_h, N_v\)) are varied according to a proportion $C$, while epidemiological parameters such as transmission rates (\(\beta_h, \beta_v\)), recovery rate (\(\gamma_h\)), waning immunity (\(\omega_h\)), exogenous infection rate (\(\epsilon_h\)), and vertical transmission probability (\(p_v\)) are varied independently. 

Although the parameter ranges are normalized to published values rather than estimated from incidence data, this framework enables the exploration of a wide spectrum of dynamical behaviors under consistent assumptions. It thus provides a systematic way to identify the parameters that have the greatest impact on dengue outbreak statistics.

The sensitivity results shown in Figures~\ref{fig:parameter_importance} and~\ref{fig:parameter_ranking} illustrate both the overall importance of the parameters and their contributions to specific quantities of interest. Notably, the host recovery rate $\gamma_h$ has a strong influence on the area under the curve and on the standard deviation of the number of infected humans. This underscores the relevance of designing control campaigns aimed at protecting infectious hosts from mosquito bites to control dengue.

\subsection{Variance of the endemic equilibrium}
\label{subsec:psd}

In order to carry out the uncertainty analysis of the endemic equilibrium~\eqref{eqn:ee} we describe dengue transmission dynamics defined by Figure~\ref{fig:model} and Table~\ref{tab:model} using a
continuous-time Markov jump process and adapt a series of known results to the model. Let the state of the system at time $t$ be denoted by a vector of random variables $X(t) = (X_1(t), X_2(t), X_3(t), X_4(t), X_5(t))$, where $X_1$, $X_2$, and $X_3$ represent the number of susceptible ($S_h$), infectious ($I_h$), and recovered ($R_h$) humans, respectively, and $X_4$ and $X_5$ denote the number of susceptible ($S_v$) and infectious ($I_v$) mosquitoes. We assume that the total human and vector populations remain constant over time, i.e., $X_1(t) + X_2(t) + X_3(t) = N_h$ and $X_4(t) + X_5(t) = N_v$. 
We denote a realization of the vector $X$ with lowercase $x(t) = (x_1(t), x_2(t), x_3(t), x_4(t), x_5(t))$.

The possible epidemic events, along with their corresponding propensities and stoichiometry, are listed in Table~\ref{tab:model}. Infection rates for both hosts and vectors are scaled by the total human population $N_h$. 

The joint probability distribution of the state variables is governed by a Forward Kolmogorov equation
\begin{equation}
\label{eq:cma}
\frac{dP(x,t)}{dt}  = \sum_{j=1}^{12}\alpha_{j}(x-v_{j})P(x-v_{j},t)-\sum_{j=1}^{12}\alpha_{j}(x)P(x,t).
\end{equation}

Under the random mixing hypothesis, it is standard to represent the state variables $x_i$ as 
\[
\frac{x_{i}}{\Omega} = y_{i}+\frac{z_{i}}{\sqrt{\Omega}},
\]
where $y_{i}$ denotes the deterministic mean, $z_{i}$ the stochastic fluctuation, and $\Omega$ the system size. 

For vector-borne epidemic models, however, the scaling must reflect host and vector population sizes. In particular, host variables scale with the host population $N_h$, while vector variables scale with the vector population $N_v$:
\begin{equation}
\label{eq:linear_noise}
\begin{split}
\dfrac{x_{i}}{N_h} &= y_{i}+\dfrac{z_{i}}{\sqrt{N_h}},\quad i=1,2,3,\\
\dfrac{x_{i}}{N_v} &= y_{i}+\dfrac{z_{i}}{\sqrt{N_v}},\quad i=4,5.
\end{split}
\end{equation}

We take the host population size $N_h=\Omega$ as the system size and define the vector-to-host ratio $C=N_v/N_h$. Consequently, we write
\begin{equation}
\label{eq:scale}
\dfrac{x_{i}}{N_h} = \varphi_{i}+\dfrac{\psi_{i}}{\sqrt{N_h}},\quad i=1,\ldots,5,
\end{equation}
where
\[
\varphi_{i} =
\begin{cases}
y_{i}, & i=1,2,3,\\
C y_{i}, & i=4,5,
\end{cases}
\qquad
\psi_{i} =
\begin{cases}
z_{i}, & i=1,2,3,\\
\sqrt{C}\,z_{i}, & i=4,5.
\end{cases}
\]

We compute the power spectral density (PSD) of fluctuations in the stochastic epidemic model. For the van Kampen system-size expansion, we adopt the notation of Grima~\cite{grima2010effective} and Thomas {\it et al.}~\cite{thomas2012intrinsic}, and we follow Komorowski {\it et al.}~\cite{komorowski2009bayesian} to write the macroscopic system. The deterministic mean-field dynamics are
\begin{equation}
\label{eq:mean_field}
\frac{d\varphi_{i}}{dt}=\sum_{k=1}^{12}S_{ik}\,\alpha_{k}(\varphi),
\end{equation}
which describe the evolution of the rescaled state variables. This is the same equation as the deterministic model~\eqref{eqn:vectorhost}.

In the linear noise approximation, The joint probability distribution $\Pi(\psi)$ of the system fluctuations is the following Fokker-Planck equation
\begin{equation}
\label{eqn:fokker_planck}
\frac{\partial\Pi(\varphi,\psi,t)}{\partial t} = 
-\sum_{i,j=1}^{5}A_{ij}(\varphi)\frac{\partial}{\partial \psi_{j}}\big(\psi_{i}\Pi\big)
+\frac{1}{2}\sum_{i,j=1}^{5}B_{ij}(\varphi)\frac{\partial^{2}\Pi}{\partial \psi_{i}\,\partial \psi_{j}},
\end{equation}
where the drift and diffusion matrices are given by
\begin{equation}
\label{eqn:drift_diffusion}
A_{ij}(\varphi)=\sum_{k=1}^{12}S_{ik}\,\dfrac{\partial\alpha_{k}(\varphi)}{\partial\varphi_{j}},
\qquad
B_{ij}(\varphi)=\sum_{k=1}^{12}S_{ik}S_{jk}\,\alpha_{k}(\varphi).
\end{equation}

Following Alonso \textit{et al.}~\cite{alonso2007stochastic}, we apply Fourier theory to study the variance of the endemic equilibrium $\varphi^{EE}$ defined in equation~\eqref{eqn:ee}. For this purpose equation~\eqref{eqn:fokker_planck} is approximated by the linear Langevin equation
\begin{equation}
\label{eqn:langevin}
\frac{d\psi}{dt} = A(\varphi^{EE})\psi+\eta(t),
\end{equation}
where $\mathbb{E}[\eta(t)]=0$ and $\mathbb{E}[\eta(t)\eta(s)] = B(\varphi^{EE})\delta(t-s)$. 
Särkkä and Solin~\cite{sarkka2019applied} derive the power spectral density
\[
S(\omega,\varphi^{EE}) \;=\; (i\omega I - A(\varphi^{EE}))^{-1} \, B(\varphi^{EE}) \, (-i\omega I - A(\varphi^{EE})^{\top})^{-1},
\]
and show that the stationary covariance $\Gamma(\varphi^{EE})$ satisfies the Lyapunov equation
\begin{equation}
\label{eqn:lyapunov}
A(\varphi^{EE}) \Gamma + \Gamma(\varphi^{EE}) A(\varphi^{EE})^{\top} + B(\varphi^{EE}) = 0.
\end{equation}
Moreover, by the Wiener--Khinchin theorem, the stationary covariance $\Gamma(\varphi^{EE})$ given by equation~\eqref{eqn:lyapunov} equals the inverse Fourier transform of the power spectral density
\[
\Gamma(\varphi^{EE})=\mathcal{F}^{-1}\{S(\omega,\varphi^{EE})\}.
\] 
In Section~\ref{sec:results}, we solve equation~\eqref{eqn:lyapunov} numerically to analyze $\Gamma(\varphi^{EE})$ in the contact-rate plane.

\begin{remark}
It is straightforward to compute the covariance of the endemic equilibrium for any epidemic model with vital dynamics defined from first principles similar to Figure~\ref{fig:model} and Table~\ref{tab:model}.
\end{remark}

\section{Results}
\label{sec:results}

In this Section we gather numerical evidence which relies on the mathematical foundation provided by the qualitative analysis of the macroscopic model developed in Section~\ref{sec:macroscopic}. Section~\ref{sec:SA_results} shows the Sobol sensitivity analysis. Section~\ref{sec:PSD_results} presentes the uncertainty analysis of the endemic equilibrium.

\subsection{Sensitivity analysis} 
\label{sec:SA_results}
In the setting described in Subsection~\ref{subsec:sobol}, we rank dengue parameters in the context of a seasonal outbreak. Figure~\ref{fig:parameter_importance} shows the mean sensitivity of the model parameters across all outputs. The quotient of population sizes $C$ is the leading parameter in importance in the macroscopic model during an outbreak. In terms of the population fractions $i_h=\tfrac{I_h}{N_h}$ and $i_v=\tfrac{I_v}{N_h}$, the forces of infection are
\[
\lambda_h = C\beta_h i_h+\epsilon_h,\qquad \lambda_v = \beta_v i_v,
\]
showing that $C$ affects transmission in both directions asymmetrically. This is confirmed in the first column of Figure~\ref{fig:parameter_ranking}. Because all considered summary statistics are monotonic with respect to the force of infection, uncertainty in $C$ is propagated to all four outputs asymmetrically, affecting more the vector.

The condition \(ST(C) > S1(C)\) implies strong interactions. Parameter $C$ acts together with \(\beta_h\) and \(\beta_v\), so uncertainty in \(C\) also increases the apparent effect of other parameters, e.g. $\mathcal{R}^{2}$. Two practical public health recommendations follow immediately. First, obtain reliable estimates of \(C\). Although the vector--to--host ratio \(C\) is not directly measurable, it can be inferred using standard entomological indicators of mosquito abundance---such as adult mosquito trap counts~\cite{Narayanasamy2025}, pupal/productivity surveys~\cite{focks2006multicountry}, and mark--release--recapture studies~\cite{guerra2014global}---combined with demographic estimates of the human population. Calibration models linking these indices to adult mosquito density are increasingly used in dengue surveillance~\cite{polwiang2020time,aswi2019bayesian}. Our sensitivity analysis indicates that uncertainty in these estimates can dominate epidemic predictions, underscoring the importance of improving the reliability of \(C\) for both inference and control.


Second, implement measures that reduce \(C\) (larval control, household screening, and safer water storage), which should reduce peak size, AUC, and standard deviation. The caveat is that poorly measured \(C\) can obscure the benefits of improving contact-rate estimates and forecasts. For surveillance, priority should be given to robust proxies of $N_v$, such as ovitrap indices, paired with calibration models to convert indices into abundance. $N_h$ can be obtained from census and mobility-adjusted denominators. Better $C$ data improve inference and forecasting. Notably, this result is in agreement with Smith \textit{et al.}~\cite{smith2012ross} who recommend interventions informed with the vector capacity, which depends linearly on $C$.

The human recovery rate \(\gamma_h\) emerges as one of the most influential parameters in the system. In the sensitivity analysis, \(\gamma_h\) exhibits a larger first--order sensitivity index than the contact rates \(\beta_h\) and \(\beta_v\), indicating a strong direct contribution to outbreak dynamics. Mechanistically, \(\gamma_h\) controls the infectious period in humans: increasing \(\gamma_h\) shortens the time during which infected individuals contribute to transmission, thereby reducing epidemic intensity. As a result, larger values of \(\gamma_h\) are associated with lower epidemic peaks, reduced area under the incidence curve, and decreased variance of the human infectious class \(I_h\). This effect is most pronounced for the area under the curve and the standard deviation of \(I_h\), as shown in Figure~\ref{fig:parameter_ranking}.

In contrast, although the human--to--vector contact rate \(\beta_v\) has a smaller first--order sensitivity index than \(\gamma_h\), it displays a comparatively larger total sensitivity index, reflecting strong interaction effects with other model parameters. The influence of \(\beta_v\) is distributed across most epidemic summary statistics, with the exception of the minima of \(I_h\) and \(I_v\). Notably, the impact of \(\gamma_h\) on mosquito--related summary statistics is limited, whereas \(\beta_v\) plays a broader role in shaping both human and vector dynamics through parameter interactions.

Taken together, these results highlight a clear separation between parameters with dominant direct effects, such as \(\gamma_h\), and parameters whose influence arises primarily through interactions, such as \(\beta_v\). From an inference perspective, the strong sensitivity of epidemic outcomes to \(\gamma_h\) implies that uncertainty in the human recovery period can substantially affect predictions of outbreak magnitude and variability.

A nonzero $p_v$ produces infectious mosquitoes even when human prevalence is low, lowering the threshold for persistence and allowing transmission outside the main season. This appears as a higher AUC of infectious mosquitoes in Figure~\ref{fig:parameter_ranking}. This finding supports combining adult mosquito control with actions that target immature stages, such as long-lasting larvicides and container management, especially in places where field data show that $p_v$ is relevant.

Finally, on the time scale of weeks to months, human mortality $\mu_h$ has almost no effect on outbreak dynamics. Exogenous infections $\epsilon_h$ also play a minor role once local spread is established, since peak size and AUC are mainly determined by the vector--host ratio $C$ and the human recovery rate $\gamma_h$, with the exception of highly connected cities at the start of the season. On the other hand, $\epsilon_h$ affects endemicity by increasing the infectious endemic states as shown in the proof of Theorem~\ref{theo:eestar}. Vector mortality $\mu_v$ has only a small impact on outbreak size, AUC, and variability over short periods, because its effect is partly balanced by $C$ and the transmission rates $\beta_h,\beta_v$, and because neither strong seasonality nor incubation dynamics are included in the time frame considered. Waning immunity $\omega_h$ mainly influences multi-season behavior and long-term persistence, but its effect is weak for a single-season deterministic outbreak.

\begin{figure}[ht]
\centering
\includegraphics[width=\textwidth]{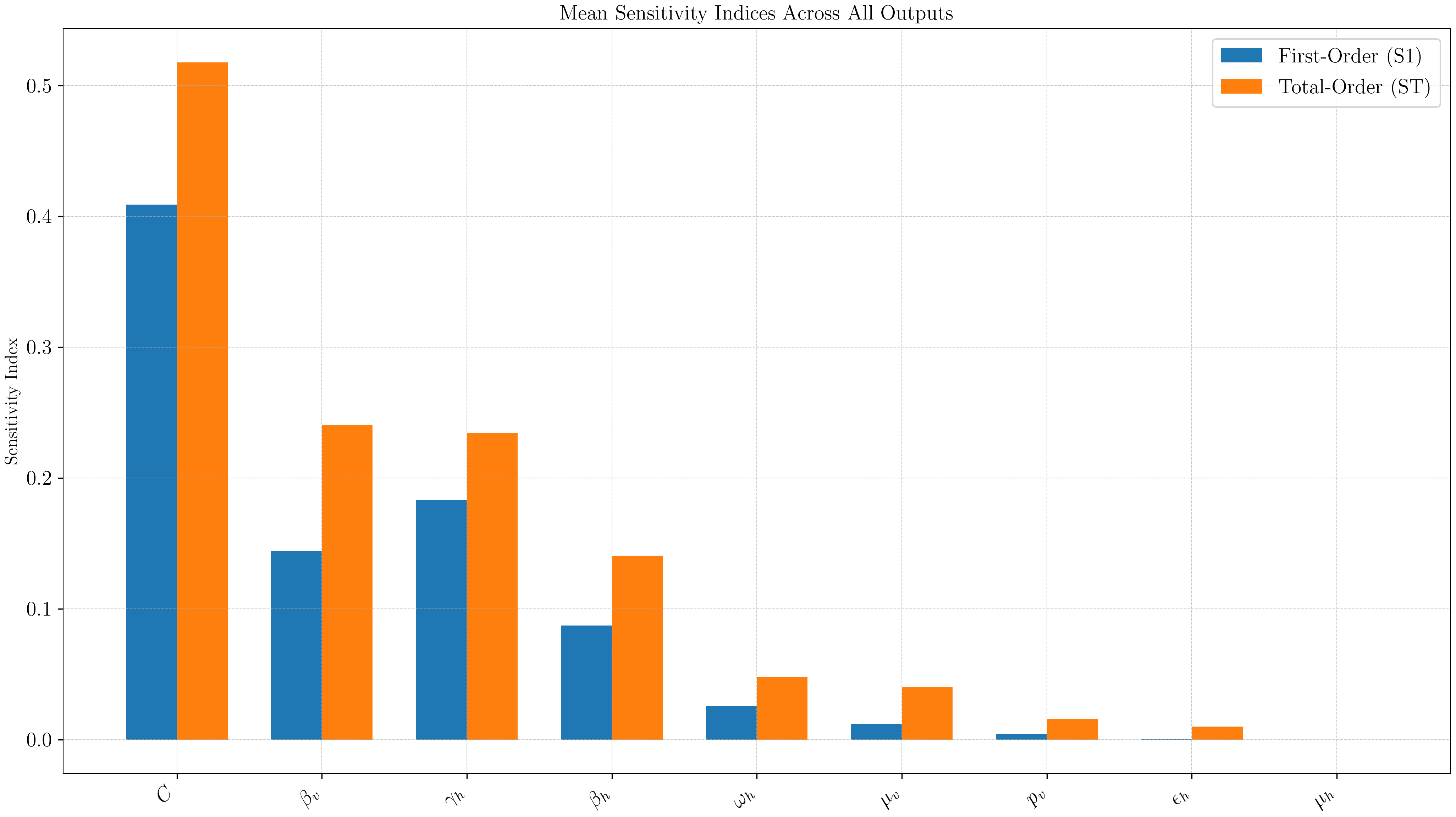}
\caption{
Mean first--order ($S_1$, blue) and total--order ($S_T$, orange) Sobol sensitivity indices of the dengue model parameters, averaged across outbreak summary statistics (peak incidence, area under the curve, standard deviation, and minimum incidence). The vector--to--host ratio $C$ is the dominant parameter, exhibiting the largest total--order sensitivity index and indicating a strong influence that includes interaction effects with other parameters. Among the remaining parameters, the human recovery rate $\gamma_h$ has the largest first--order sensitivity index, highlighting its dominant direct effect on outbreak magnitude and variability. In contrast, the mosquito--to--human transmission rate $\beta_v$ displays a comparatively larger total--order index than first--order index, indicating that its influence arises primarily through interactions rather than direct effects. Vertical transmission $p_v$ has a moderate but non--negligible contribution, while exogenous infection $\epsilon_h$ and human mortality $\mu_h$ have negligible influence on single--season outbreak dynamics. Vector mortality $\mu_v$ and waning immunity $\omega_h$ show only minor effects within the considered time scale.
}
\label{fig:parameter_importance}
\end{figure}

\begin{figure}[ht]
\centering
\includegraphics[width=0.9\textwidth]{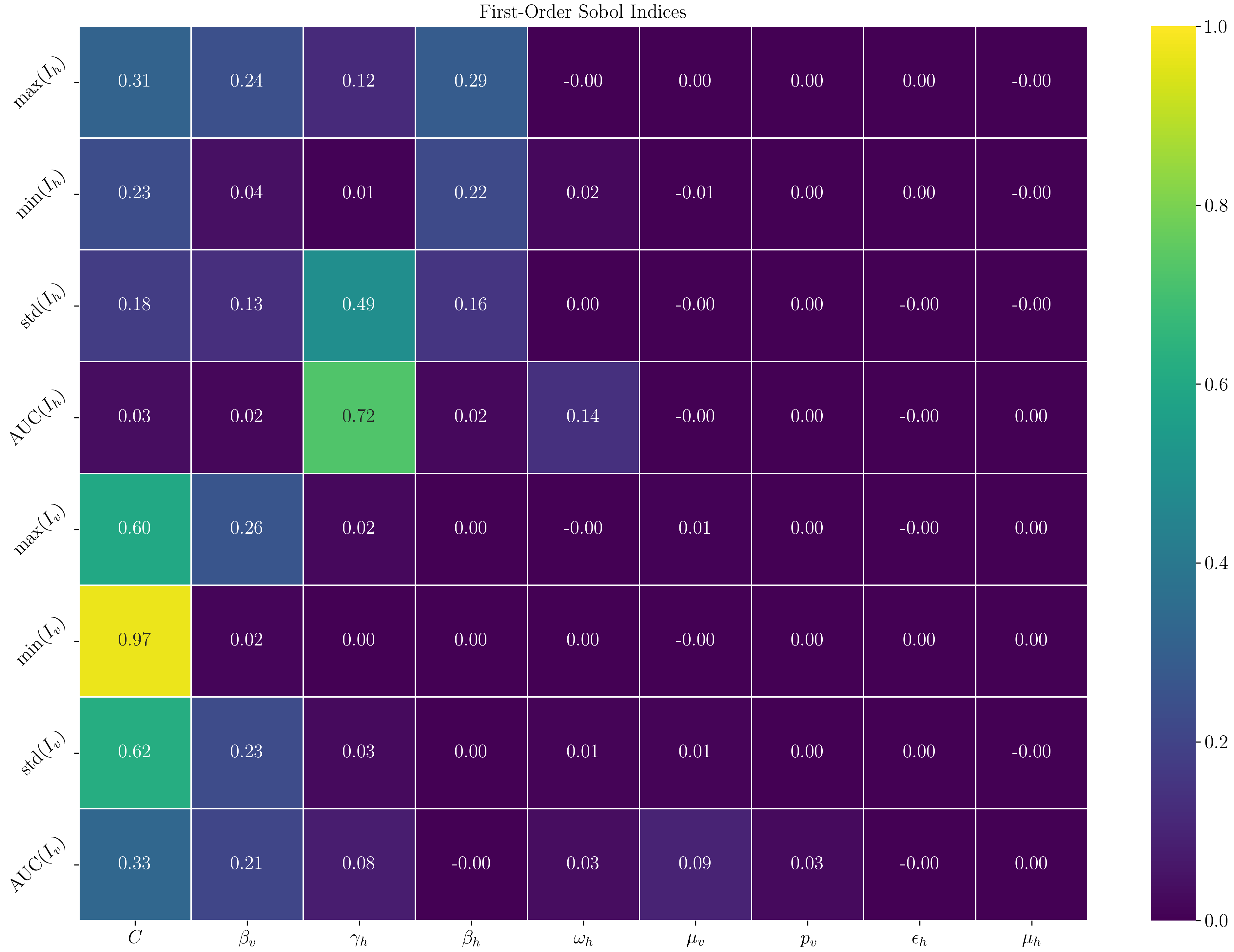}
\caption{
First-order Sobol indices ($S1$) of dengue model parameters are shown for each outbreak statistic: maximum, minimum, standard deviation, and area under the curve (AUC) of human ($I_h$) and vector ($I_v$) infectious fractions. The vector-to-host ratio $C$ strongly affects maxima and minima in both populations, while the human recovery rate $\gamma_h$ dominates the variance and AUC of human infections. Vertical transmission $p_v$ mainly influences the mosquito AUC, and $\beta_v$ contributes to variability in both hosts and vectors. Human mortality $\mu_h$, exogenous infection $\epsilon_h$, and waning immunity $\omega_h$ have negligible first-order effects in the single-season deterministic setting. This decomposition clarifies parameter-specific roles that aggregate rankings may obscure.
}
\label{fig:parameter_ranking}
\end{figure}

\subsection{Uncertainty analysis of the endemic equilibrium}
\label{sec:PSD_results}

The comparison between the variances of the infectious classes and the basic reproductive number $\mathcal{R}_0$ as functions of the contact rates $\beta_{h}$ and $\beta_{v}$ in Figure~\ref{fig:psd} reveals a fundamental difference between deterministic and stochastic perspectives of disease dynamics. The contour plot of $\mathcal{R}_0$ shows a smooth and monotonic increase as either contact rate grows, reflecting the well-known threshold property: once $\mathcal{R}_0 > 1$, endemic transmission becomes sustainable, and larger contact rates reinforce persistence. The line $\mathcal{R}_0=1$ in the figure is determined by the population ratio $C=N_{v}/N_{h}$: for larger $C$, the threshold can be crossed with smaller values of both $\beta_{h}$ and $\beta_{v}$. Consequently, large $C$ shifts the system into the region of maximal stochastic fluctuations more easily, underscoring the need to estimate this ratio carefully and prioritize interventions that reduce it. The variance plots further illustrate this point. For humans ($\mathrm{var}(I_h^{EE})$), fluctuations are concentrated in a narrow region near low $\beta_v$, and decay rapidly as transmission intensifies, indicating that once mosquito-to-human transmission is high, the endemic equilibrium suppresses noise in the human class even as $\mathcal{R}_0$ continues to rise. In contrast, the variance of the vector class ($\mathrm{var}(I_v^{EE})$) displays a highly non-monotonic pattern: negligible for most parameters but extremely amplified in localized regions of the $(\beta_h, \beta_v)$ plane, with values orders of magnitude larger than those of the human class. These ``hot spots'' of stochastic amplification emerge despite smooth changes in $\mathcal{R}_0$, showing that noise intensity is not proportional to the deterministic reproductive number. In particular, large $\beta_{v}$ not only amplifies fluctuations in infectious mosquitoes but also implies the production of more infectious eggs that may hatch in the following season, reinforcing long-term persistence. Taken together, these results strengthen the policy recommendations already highlighted in this work: estimate and reduce $C$ as a primary objective, and in parallel, diminish $\beta_{v}$ by protecting infectious hosts from mosquito bites, since both mechanisms mitigate the regions where stochastic amplification could destabilize epidemiological outcomes.

\begin{figure}[ht]
\centering
\includegraphics[width=0.9\textwidth]{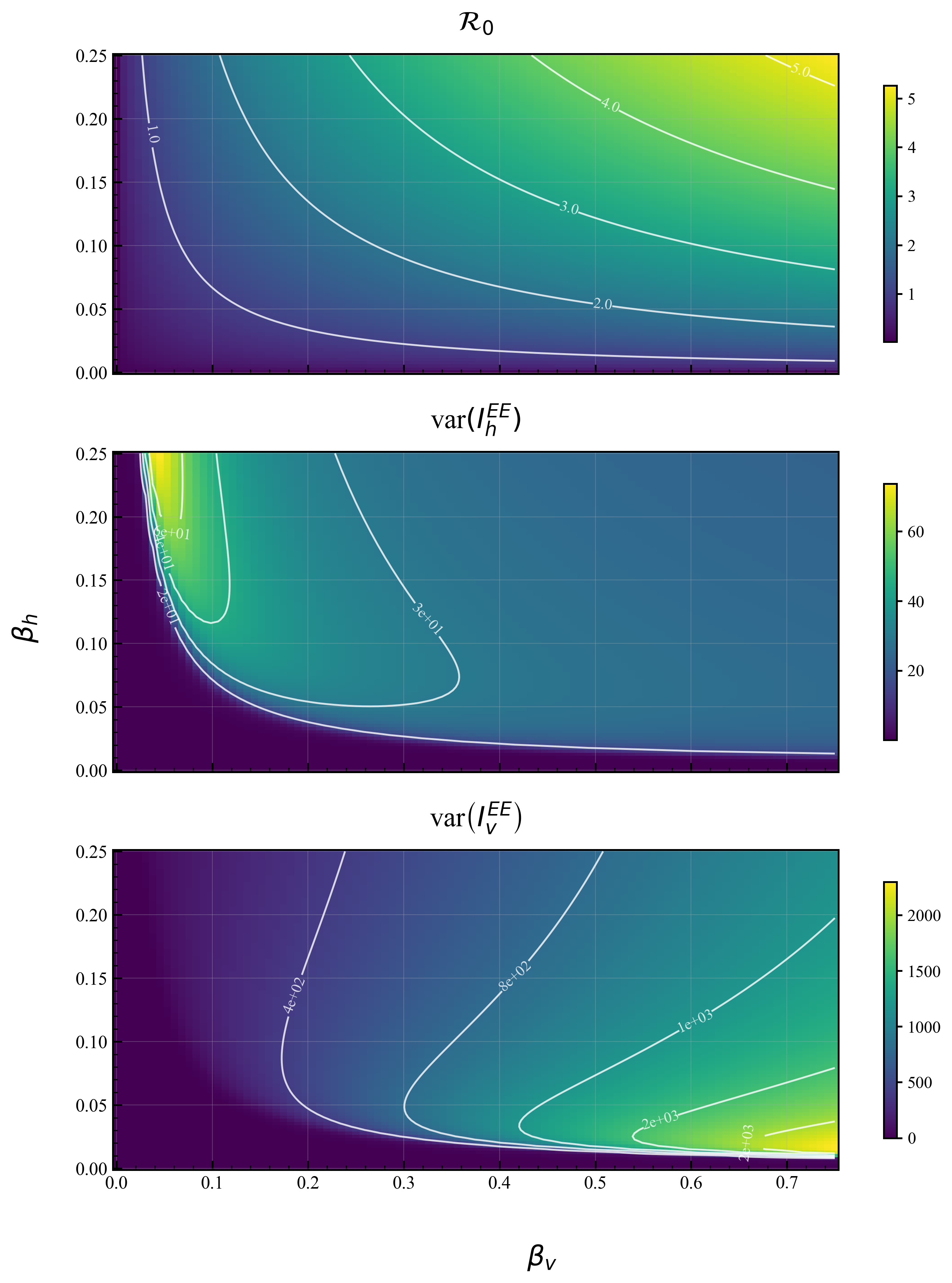}
\caption{Basic reproduction number and variance of infectious classes at the endemic equilibrium as functions of the human–to–vector ($\beta_h$) and vector–to–human ($\beta_v$) contact rates. (Top) Contour plot of the basic reproduction number showing the threshold $\mathcal{R}_0=1$ and higher values as $\beta_h$ and $\beta_v$ increase. (Middle) Variance of the number of infectious humans at endemic equilibrium, highlighting a region of elevated fluctuations near the threshold. (Bottom) Variance of infectious vectors at endemic equilibrium, with large fluctuations emerging for high $\beta_v$ values. Together, the panels illustrate how transmission heterogeneity in contact rates shapes both the deterministic threshold for persistence and the intensity of stochastic fluctuations.}
\label{fig:psd}
\end{figure}

To complement the variance maps and provide direct intuition for the localized variance hot spots, we performed stochastic simulations of the underlying Markov jump process using a Gillespie algorithm, see Figure~\ref{fig:gillespie}. All simulations were conducted while holding the deterministic reproduction number fixed at $\mathcal R_0 = 1.3$, with vertical transmission set to $p_v = 0.05$ and the vector--to--host ratio fixed at $C = 1$. We chose $C=1$ for the sake of clarity. Larger values of $C$ preserve the patterns shown in Figure 4, but the line $\mathcal{R}_{0}=1$ gets closer to the $\beta_v$--$\beta_h$ axes. Under these constraints, the human--to--mosquito and mosquito--to--human contact rates satisfy
\[
\beta_h \beta_v
=
\frac{\left(\left(\mathcal{R}_{0}-\frac{p_v}{2}\right)^{2}-\left(\frac{p_v}{2}\right)^{2}\right)^{2}}{C}
(\gamma_h+\mu_h)\mu_v,
\]
which defines a curve in the $(\beta_h,\beta_v)$ plane along which $\mathcal R_0$ remains constant.

We selected representative parameter pairs lying on this fixed--$\mathcal R_0$ curve but belonging to distinct variance regimes identified in Fig.~4. In the low--variance regime, stochastic trajectories of both infected humans and infected vectors remain tightly clustered around the median, indicating weak stochastic amplification. In contrast, in the high--variance regime, mosquito infection trajectories exhibit pronounced realization--to--realization variability, characterized by intermittent burst--like dynamics, despite identical deterministic transmission potential. These simulations demonstrate that the variance hot spots arise from how the fixed transmission potential encoded in $\mathcal R_0$ is distributed between $\beta_h$ and $\beta_v$, rather than from changes in $\mathcal R_0$ itself. As a result, stochastic amplification can dominate epidemic variability in localized regions of parameter space even when the deterministic threshold is held constant.

\begin{figure}[ht]
\centering
\includegraphics[width=\textwidth]{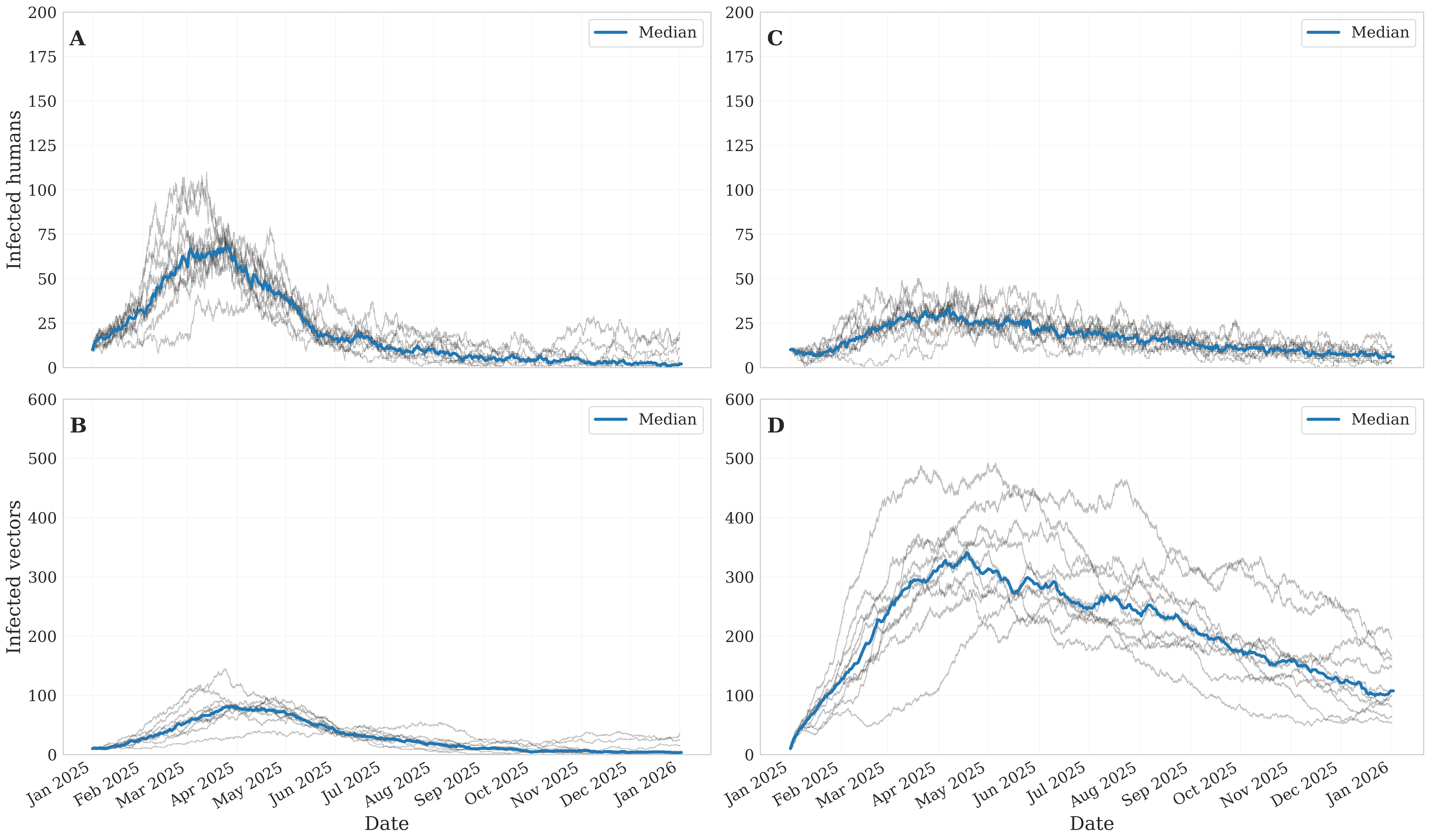}
\caption{Gillespie realizations of the dengue transmission model for two parameter pairs lying on the same fixed--$\mathcal R_0$ curve ($\mathcal R_0 = 1.3$, $p_v = 0.05$, $C = 1$). Left panels: low--variance regime ($\beta_h = 0.25$, $\beta_v = 0.06$), where stochastic trajectories of infected humans and vectors remain tightly clustered around the median. Right panels: high--variance regime ($\beta_h = 0.015$, $\beta_v = 0.75$), where mosquito infection dynamics exhibit strong realization--to--realization variability and intermittent burst--like behavior despite identical deterministic reproduction numbers. Gray lines denote individual stochastic trajectories; the blue line indicates the median trajectory across realizations.
}
\label{fig:gillespie}
\end{figure}

\section{Discussion}
\label{sec:discussion}

The modeling, qualitative and quantitative analysis presented in this work extends the Dietz model of mosquito-borne pathogen transmission (as reviewed by Smith \textit{et al.}~\cite{smith2012ross}) in several novel directions, with particular emphasis on dengue dynamics and control. First, by combining the linear noise approximation with Wiener-Khinchin theorem, we compute the stationary covariance of the endemic equilibrium and map the variance of the infectious human and vector classes across the $(\beta_h,\beta_v)$ plane. These maps reveal localized regions of strong amplification, which stand in contrast to the smooth and monotonic dependence of $\mathcal{R}_0$ on the same parameters. Importantly, the maxima of human and vector variance are in different parts of the contact rate plane. The asynchronous location of these two regions suggests an ecological and evolutionary advantage for dengue: the pathogen effectively has two distinct niches to occupy during the endemic phase, one dominated by fluctuations in humans and another by fluctuations in mosquitoes. This observation generalizes naturally to the multi-strain case, where different viral lineages may exploit separate fluctuation regimes to persist in the host--vector system.

Second, we perform a global Sobol sensitivity analysis of epidemic summary statistics, including peak prevalence, trough values, variance, and area under the curve. This analysis establishes a clear ranking of parameter influence, with the vector--to--host ratio $C=N_v/N_h$ and the human recovery rate $\gamma_h$ emerging as the most influential drivers of seasonal dengue dynamics, surpassing the contact rates themselves. This finding leads directly to the recommendation that $C$ be estimated carefully in field studies and that reducing $C$ should be a central control objective. Mechanistically, a large vector population (\(C\) large) increases the absolute number of vertically infected offspring produced per unit time when \(p_v>0\). When mosquito populations experience strong stochastic fluctuations during the off--season, this enlarged pool of infected eggs can persist through unfavorable conditions and subsequently hatch as infectious mosquitoes at the onset of the next transmission season. This mechanism provides a biologically plausible pathway by which vertical transmission, in combination with large vector populations and demographic noise, may contribute to the persistence and re--emergence of dengue across transmission seasons.


Third, we show that large values of $\beta_v$ not only increase the variance of infectious mosquitoes but also imply the production of more infectious eggs that can hatch in subsequent seasons. This coupling provides a mechanistic link between stochastic amplification and inter-seasonal persistence, and it reinforces the need to protect infectious hosts from mosquito bites in order to reduce $\beta_v$ and its long-term consequences.

Together, these results refine the Dietz framework by adding a stochastic dimension, quantifying parameter influence through global sensitivity analysis, and linking these insights to concrete public health priorities: estimate and reduce $C$ as a first step, and reduce $\beta_v$ by protecting infectious hosts from mosquito bites.

\section{Conclusions}
\label{sec:conclusions}

In conclusion, our analysis builds on the Dietz framework by incorporating stochastic fluctuations, global sensitivity analysis, and extended threshold formulae to better understand dengue dynamics and control. The results demonstrate that variance patterns in the infectious human and vector classes are not simply proportional to $\mathcal{R}_0$, but instead display localized regions of strong amplification that reveal hidden ecological structure in the contact-rate plane. The asynchrony of these regions implies that dengue benefits from two separate fluctuation niches during the endemic phase, an advantage that may extend to multi-strain dengue by allowing different serotypes to exploit distinct ecological regimes. At the same time, the global Sobol sensitivity analysis underscores the dominant role of the vector--to--host ratio $C$ and the human recovery rate $\gamma_h$ in shaping outbreak dynamics, placing these parameters above the contact rates themselves in importance. 
In this setting, the interaction between a large vector population (\(C\) large), nonzero vertical transmission (\(p_v>0\)), and strong stochastic fluctuations in mosquito abundance increases the absolute production and survival of infected eggs. Together, these factors provide a mechanistically consistent pathway by which infectious mosquitoes may emerge at the start of subsequent transmission seasons, contributing to the persistence and re--initiation of dengue transmission.
These findings have clear policy implications: reliable field estimation of $C$ should be prioritized, followed by interventions that reduce mosquito abundance relative to the host population; simultaneously, measures that lower $\beta_v$ by protecting infectious humans from mosquito bites are critical, as they reduce both immediate fluctuations in vector infection and the reservoir of infectious eggs that can sustain transmission across seasons. Taken together, the theoretical and numerical results show that dengue persistence and variability are governed not only by deterministic thresholds but also by the structure of stochastic amplification, and that effective control must act on the ecological levels that determine both mean transmission and the magnitude of fluctuations.

\clearpage
\bibliographystyle{unsrt}
\bibliography{references.bib}

\end{document}